\documentclass[11pt]{article}
\usepackage{amsmath,amsthm,amssymb}
\usepackage[numbers]{natbib}
\usepackage[margin=1in]{geometry}
\usepackage{thm-restate}
\usepackage{comment}
\usepackage{microtype}
\usepackage[colorlinks,linkcolor=blue,citecolor=purple]{hyperref} 

\usepackage{tikz}
\usetikzlibrary{shapes}
\usetikzlibrary{patterns}
\usetikzlibrary{calc,3d,intersections, positioning,intersections,shapes}
\usetikzlibrary{arrows,positioning,decorations.pathmorphing,trees}
\usetikzlibrary{decorations.fractals}
\usepackage{tikz-3dplot}
\usetikzlibrary{arrows.meta}
\tikzset{
  MyStealth/.tip={Stealth[line width=.7pt, inset=1pt, length=10pt, angle'=30]}
}

\tikzstyle{n}=[]
\tikzstyle{bl-ci}=[circle, draw, fill=black, inner sep=0pt, minimum width=4pt]
\tikzstyle{g-ci}=[circle, draw, fill=white, inner sep=0pt, minimum width=4pt]
\tikzstyle{cl-sq}=[rectangle, minimum height=6pt, draw, fill=white, inner sep=0pt, minimum width=6pt]
\tikzstyle{bl-sq}=[rectangle, minimum height=6pt, draw, fill=black, inner sep=0pt, minimum width=6pt]
\tikzstyle{r-sq}=[rectangle, minimum height=6pt, draw, fill=black, inner sep=0pt, minimum width=6pt]
\tikzstyle{b-sq}=[rectangle, minimum height=6pt, draw, fill=gray, inner sep=0pt, minimum width=6pt]
\usepackage{enumitem}
\usepackage[ruled,vlined]{algorithm2e}
\usepackage{authblk}

\usepackage{xcolor}
\usepackage{todonotes}

\newtheorem{thm}{Theorem}

\newtheorem{lem}[thm]{Lemma}

\newtheorem{claim}[thm]{Claim}

\newtheorem*{remark*}{Remark}

\usepackage[cm]{fullpage}

\begin{document}

\title{The Condorcet Dimension of Metric Spaces}

\author[1]{Alexandra Lassota} 
\author[2]{Adrian Vetta}
\author[3]{Bernhard von Stengel}

\affil[1]{Eindhoven University of Technology: {\tt a.a.lassota@tue.nl}}
\affil[2]{McGill University: {\tt adrian.vetta@mcgill.ca}}
\affil[3]{London School of Economics: {\tt b.von-stengel@lse.ac.uk}}

\date{}

\maketitle

\bibliographystyle{plainnat}

\renewcommand{\thefootnote}{\arabic{footnote}}

\begin{abstract}
A Condorcet winning set is a set of candidates such that no other candidate is preferred by at least half the voters over
all members of the set.
The Condorcet dimension, which is the minimum cardinality of a Condorcet winning set, is known to be at most logarithmic in the number of candidates. 
We study the case of elections where voters and candidates
are located in a $2$-dimensional space with preferences 
based upon proximity voting.
Our main result is that the Condorcet dimension is at most $4$, under both the Manhattan norm and the infinity norm, which are natural measures in electoral systems.
We also prove that any set of voter preferences can be embedded into a metric space of sufficiently high dimension for any $p$-norm, including the Manhattan and infinity norms.
\end{abstract}

\section{Introduction}\label{s:introduction}

Consider an \textit{election}, that is, a set of candidates $\mathcal{C}$
of cardinality $m$ and a set of voters $\mathcal{V}$ of
cardinality~$n$, where each voter $v\in \mathcal{V}$ has
strict preferences $\succ_v$ over all candidates.
An ideal winner in such an election is a Condorcet winner, a candidate 
that ``beats" any other candidate in a pairwise voting contest.
Formally, $i\in C$ is a {\em Condorcet winner} if, for every candidate $j\in \mathcal{C}\setminus \{i\}$, more than
half of the voters prefer $i$ over~$j$. To avoid ties, we assume the number~$n$ of voters is odd. Unfortunately, it is easy to construct elections where a Condorcet winner does not exist; indeed typically there is no Condorcet winner.

Given this, Elkind, Lang and Saffidine~\cite{ELS15} proposed a relaxation where,
rather than a single winning candidate, we desire a set of winning candidates that collectively beats any other candidate.
Specifically, a set of candidates $S\subseteq \mathcal{C}$ is a {\em Condorcet winning set} if, 
for every candidate $j\in \mathcal{C}\setminus S$, more than half of the voters prefer some (voter-dependent) candidate in $S$ to $j$. 
That is,
\begin{equation}\label{eq:CWS}
\forall j\in \mathcal{C}\setminus S, \  \#\{v\in \mathcal{V} : \exists i\in S, v \text{ prefers } i \text{ over } j \} > \tfrac12 n
\end{equation}
where $\#$ denotes set cardinality.
Elkind et al.~\cite{ELS15} showed that a Condorcet winning set always exists provided we allow it to be of logarithmic size in~$m$. 
We give the short proof, which highlights the distinction between the graph interpretation of the problem and Condorcet winning sets. 

To do so, though, we first need some basic graph definitions.
A {\em directed graph} $G=(V,A)$, or {\em digraph}, consists of a set $V$ of elements, called {\em vertices}, and a set $A$ of ordered pairs of vertices, called {\em arcs}. For each arc $a=(i,j)\in A$, the vertex $i$ is the {\em tail} of $a$ and the vertex $j$ is the {\em head} of~$a$.
Graphically, the arc $(i,j)$ is visualized as an arrow pointing from $i$ to $j$ and we say that $i$ is an {\em in-neighbour} of $j$ and that $j$ is an {\em out-neighbour} of $i$. The {\em in-degree} and {\em out-degree} of a vertex are the total number of its in-neighbours and out-neighbours, respectively.
Finally, in a directed graph $G=(V,A)$ a {\em dominating set} $T$ is a subset of vertices with the property that, for every vertex $i\in V\setminus T$, there exists $j\in T$ such that $(i, j)$ is an arc in $A$.

\begin{thm}[\cite{ELS15}]\label{thm:log}
In any election, there is a Condorcet winning set of size at most $\lceil\log m\rceil$.
\end{thm}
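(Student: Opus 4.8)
I would reduce the statement to a classical fact about tournaments. Because the number $n$ of voters is odd, for every ordered pair of distinct candidates $i,j$ exactly one of them is preferred by a strict majority; write $i\to j$ when more than half the voters rank $i$ above $j$. This gives a tournament on $\mathcal C$, the \emph{majority tournament}. The key observation is that every dominating set of this tournament is already a Condorcet winning set: if $S\subseteq\mathcal C$ is such that each $j\in\mathcal C\setminus S$ has $i\to j$ for some $i\in S$, then the (strict) majority of voters witnessing $i\to j$ all prefer a member of $S$ --- namely $i$ --- to $j$, which is exactly the defining condition. (The converse is false: a Condorcet winning set need not dominate the majority tournament, since the coverage may be spread across several members; but for an upper bound on the minimum size, this one direction suffices.) Hence it is enough to show that every tournament on $m$ vertices has a dominating set of size at most $\lceil\log m\rceil$ (logarithm to base $2$).

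For the tournament bound I would run the standard greedy halving argument. In any tournament on $k$ vertices the out-degrees sum to $\binom{k}{2}$, so some vertex $c$ has out-degree at least $(k-1)/2$, and hence at least $\lceil(k-1)/2\rceil$ since it is an integer. Add $c$ to $S$; then $c$ together with its out-neighbourhood dominates at least $\lceil(k-1)/2\rceil+1$ of the $k$ vertices, leaving a set $R$ of at most $\lfloor(k-1)/2\rfloor<k/2$ undominated vertices. Since $R$ induces a sub-tournament, recurse on it. Each round adds one vertex to $S$ and more than halves the undominated set, so after at most $\lceil\log m\rceil$ rounds nothing remains and $|S|\le\lceil\log m\rceil$.

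I do not anticipate a genuine obstacle. The one real idea is the reduction in the first paragraph: recognising that domination of the majority tournament --- a property strictly stronger than being a Condorcet winning set --- is still achievable with a logarithmic number of candidates, after which the result falls out of the elementary averaging bound on tournament out-degrees. The remaining points are routine bookkeeping: verifying that the recursion $k\mapsto\lfloor(k-1)/2\rfloor$ reaches $0$ within $\lceil\log m\rceil$ steps (which holds for $m\ge 2$), that ``more than half'' is handled correctly because $n$ is odd, and that the residual instance is again a tournament so the argument iterates without change.
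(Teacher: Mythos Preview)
Your proposal is correct and follows essentially the same route as the paper: both build the majority tournament (well-defined because $n$ is odd), observe that any dominating set is in particular a Condorcet winning set while noting the converse fails, and then invoke the greedy halving argument (which the paper attributes to Megiddo and Vishkin) to obtain a dominating set of size at most $\lceil\log m\rceil$. The only cosmetic difference is the arc-orientation convention---the paper directs arcs toward the winner and picks the vertex of maximum in-degree, whereas you direct arcs away from the winner and pick maximum out-degree---which is of course equivalent.
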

\begin{proof}
Consider the {\em majority digraph} which contains a vertex for each candidate $i\in \mathcal{C}$ and an arc~$(j,i)$ if a majority of voters prefer $i$ over $j$. Since the number of voters is odd, the majority digraph
is a \textit{tournament} (a digraph with exactly one arc between any two nodes). 
Megiddo and Vishkin~\cite{MV88} proved that any tournament contains a dominating set of size
at most $\lceil\log m\rceil$. To see this, simply select the vertex with the largest number of in-neighbours; remove this vertex and its in-neighbours and recurse. 
The selected set of vertices $T$ is a dominating set in the majority digraph and, thus, satisfies the property 
\begin{equation}\label{eq:dom-set}
\forall j\in \mathcal{C}\setminus T, \  \exists i\in T, \ \#\{v\in \mathcal{V} :  v \text{ prefers } i \text{ over } j \} > \tfrac12 n
\end{equation}
Observe that (\ref{eq:dom-set}) is a stronger property than~(\ref{eq:CWS}). Consequently, $T$ is a Condorcet winning set of size
at most $\lceil\log m\rceil$.
\end{proof}
In this paper, we study the {\em Condorcet dimension}, the minimum cardinality of a Condorcet winning set. Theorem~\ref{thm:log} gives a logarithmic upper bound on the Condorcet dimension.
In terms of lower bounds, Elkind et al.~\cite{ELS15} conjectured that elections exist with arbitrarily high Condorcet dimension.
Since a Condorcet winner need not exist, there are elections with Condorcet dimension at least $2$. Further, Elkind et al.~\cite{ELS15} presented instances with Condorcet dimension $3$; in addition, Geist~\cite{Gei14} presented an instance with just six candidates and six voters of Condorcet dimension~$3$.

However, no instances with Condorcet dimension greater than $3$ are currently known and we conjecture the Condorcet dimension is always at most $3$. The main results we prove in this paper are that the Condorcet dimension is at most $4$ for two special cases of elections under the spatial model of voting.

We remark that, motivated by our work and conjecture, Charikar et al.~\cite{CLR25} subsequently observed that
the groundbreaking research on approximately stable committee selection by Jiang, Munagala and Wang~\cite{JMW20} implies a constant upper bound of $16$ on the Condorcet dimension in any election. 
Refining the techniques of~\cite{JMW20}, Charikar et
al.~\cite{CLR25} show how probabilistic sampling based upon
the minimax theorem implies the existence of a Condorcet
winning set of size $6$.
This has recently been improved to $5$ by Nguyen et al.~\cite{nguyen2025}.
Thus, the Condorcet dimension is at most $5$ in any election.

\subsection{The Spatial Model of Voting}
Spurred by the conjecture that the Condorcet dimension is at most three in any election, we study Condorcet winning sets in the classical
{\em spatial model of voting} where the voters and candidates are located in a metric space. We make the standard assumption of {\em proximity voting}, whereby each voter $v$ ranks the candidates based upon their distance to $v$, with closer candidates more preferred.
We assume strict preferences; specifically, we assume the candidates are in general position.

The spatial model of voting originates in the seminal works of Hotelling~\cite{Hot29} and Black~\cite{Black48}, and was
formalized by Downs~\cite{Downs57}. These works all considered 
one-dimensional metric spaces.
Of particular interest here is the {\em median voter theorem}, 
where Black~\cite{Black48} showed that the closest candidate to the 
median voter is a Condorcet winner. 
The study of higher dimensional metric spaces was instigated by
Davis, Hinich, and Ordeshook~\cite{DHO70}, where Condorcet winners exist
only under very restrictive conditions~\cite{DDH72}.
Furthermore, the absence of a Condorcet winner has serious consequences in terms of ``agenda manipulation". For example, in this situation, the McKelvey-Schofield Chaos Theorem~\cite{McK76, Sch78} states that, starting from any proposed policy (candidate), any other policy can be implemented (elected) after a sequence of pairwise majority votes, given the addition of an appropriate set of intermediate policies for these votes.

The spatial model has since become ubiquitous in the study of voter and candidate behaviour, as illustrated by the books~\cite{EH84, EH90, MMG99, Poole05, Sch08}.
The spatial model has also recently attracted huge interest in computer science~\cite{ABP15, AP17, BLS19, CRW24, FFG16, GHS20, SE17}.
In particular, in computational social choice there has been much focus on the impact of spatial voting on 
impossibility theorems, on the computational complexity of voting rules, and on informational aspects such as metric distortion; for details see the surveys given in~\cite{AFS21, ELP25}.

In spatial models of dimension greater than one, a choice of distance measure can be made. In this paper, we focus on the 
$p$-norms, that is, for ${\bf z}\in \mathbb{R}^D$, the $p$-norm is defined as $||{\bf z}||_p = \sqrt[p]{z_1^p+z_2^p+\cdots +z_D^p}$. Specifically, we study the Manhattan norm ($p=1$) and the infinity norm ($p=\infty$). These two norms are most appropriate in political settings where each axis represents a distinct policy or characteristic~\cite{Egu11, ELP25}.
Under the Manhattan norm, a voter prefers the candidate whose sum of policy differences over all policies is minimized;
that is, the voter desires a candidate close to them on {\em average} over all policies.
Under the infinity norm, a voter prefers the candidate whose maximum difference over all policies is minimized; that is, the voter desires a candidate close to them on {\em every} policy.

It is known that any set of voter preferences over candidates has a corresponding embedding in a metric space with proximity voting, for
the Euclidean norms~\cite{BL07} and Manhattan norms~\cite{CNS22}
However, this embedding may require the dimension $D$ of the metric space to be high, specifically, $D = \min\, [m-1, n]$.
We provide simple constructions that extends this conclusion to any $p$-norm,
provided $D = \min\, [m, n]$.

This is in stark contrast to practice where the embedding is often in small dimension $D$;
for example,~\cite{AG15} suggests that a small number of
dimensions provide a good explanation for German electoral data.
This observation motivates our study of low dimensional
metric spaces, in particular $D=2$.
Of course, if the dimension $D$ of the metric space is bounded
then this does restrict the set of feasible preferences in the election.
Indeed we show that this restriction guarantees that the Condorcet dimension is at most $4$ when $D=2$.

\subsection{Our Results}
We study metric spaces with bounded dimension $D$. 
Recall the median value theorem tell us that for $D=1$
a Condorcet winner always exists. What about the case $D=2$?
In fact, Davis et al.~\cite{DHO70} proved over half a century ago that a Condorcet winner need not exist in two dimensional metric spaces for the Euclidean norm.
Consequently, it makes sense to study the Condorcet dimension
for $D\ge 2$. We prove the following:
\begin{itemize}
    \item In a $2$-dimensional metric space (with Manhattan norm or infinity norm), the Condorcet dimension of any election is indeed at most $4$ (Theorem~\ref{thm:DS4} and Theorem~\ref{thm:four-infinity}).
    \item In a $2$-dimensional metric space under the $p$-norm, there exist instances with Condorcet dimension at least~$2$, for any $p\ge 1$ (Lemma~\ref{lem:LB}).
    \item Given $m$ candidates and $n$ voters,
an embedding of dimension $D= \min\, [m, n]$ can be computed in polynomial time, for any $p$-norm (Theorem~\ref{thm:simplex-m} and Theorem~\ref{thm:simplex-n}).
\end{itemize}

Our main technical result, presented in Sections
\ref{sec:manhattan} and \ref{sec:infinity}, is that for two
dimensions the Condorcet dimension is at most~$4$.
In Section~\ref{sec:lower}, we furthermore provide a lower bound of $2$, for any $p$-norm, by formalizing and instantiating the construction of~\cite{DHO70}.
Finally, in Section~\ref{sec:embed} we extend the conclusions of \cite{BL07} and \cite{CNS22} to any $p$-norm by showing
that any set of voter preferences over candidates has a corresponding embedding in a metric space with proximity voting with $D= \min\, [m, n]$.

\section{An Upper Bound on the Condorcet Dimension: Manhattan Norm}\label{sec:manhattan}

We begin with the Manhattan norm, whose distance
function is the $1$-norm. 
Thus, given two points~${\bf p}_1=(x_1, y_1)$ and ${\bf p}_2=(x_2, y_2)$, the distance between 
them is $d^{1}({\bf p}_1, {\bf p}_2) = |x_1-y_1| + |x_2- y_2|$.
Throughout this section, all notions of distance and closeness refer to the Manhattan norm; thus, we will omit the superscript and write $d({\bf p}_1, {\bf p}_2)=d^{1}({\bf p}_1, {\bf p}_2)$.

Let $\bar{x}$ be the median $x$-coordinate of the $n$ voters. Similarly, let $\bar{y}$ be the median $y$-coordinate of the voters. Without loss of generality, we may assume $\bar{x}=0$ and $\bar{y}=0$ by shifting all points accordingly.
The $x$-axis 
and the $y$-axis 
divide the plane into four closed quadrants $\{Q_1, Q_2, Q_3, Q_4\}$.
We label the quadrants in standard counter-clockwise order starting
with the all-nonnegative quadrant $Q_1$ (thus, $Q_3$ is the
all-nonpositive quadrant).

For each quadrant $Q_i$, $1\le i \le 4$, let $c_i$ be the candidate in $Q_i$ closest to the origin $(0,0)$. 
If $Q_i$ contains no candidates then $c_i$ does not exist; 
in this case, we define $c_i$ as a null element.
Otherwise $c_i$ exists and is unique by the assumption that the candidates are in general position.
However, at least one of the $c_i$ exists as there are $m\ge 1$ candidates.
The $c_i$ need not be distinct; for example, if $c^*=(0,0)$ is a candidate then $c_1=c_2=c_3=c_4=c^*$.
We claim that $S=\{c_1, c_2, c_3, c_4\}$, where $c_i$ is omitted from $S$ if it does not exist, is a Condorcet winning set.

\begin{lem}\label{lem:quadrant}
If $Q_i$ contains at least one candidate then at least half of all voters prefer $c_i$ in $Q_i$ over any other $c$ in~$Q_i$.
\end{lem}
\begin{proof}
Take any candidate $c = (x, y)$.
Without loss of generality, let $c$ be in the positive quadrant $Q_1$.
Then $c_1=(x_1, y_1)\in Q_1$ exists.
We may assume $c\neq c_1$. Since $c_1$ is the closest candidate to the origin in the positive 
quadrant $Q_1$, we have that $x+y \ge x_1+y_1$. Equivalently,
$x-x_1 \ge y_1-y$. 
We consider two cases (see Figure~\ref{fig:lem2}):
\begin{enumerate}
\item[(a)] $ x-x_1\ge y-y_1$. 
We will show that at least half the voters prefer $c_1=(x_1, y_1)$ over $c= (x, y)$.
Since $\bar{x}=0$ is the median $x$-coordinate, at least half the
voters have an $x$-coordinate at most zero. Take any such voter $v=(x_v, y_v)$ with $x_v\le 0$,
that is, any voter $v\in Q_2\cup Q_3$.
We claim $d(c,v) \ge d(c_1, v)$, that is, $v$ prefers $c_1$ over $c$; in fact, the inequality will be strict here as the candidates are in general position.
First, let $v\in Q_3$, that is, $x_v\le 0$ and $y_v\le 0$.
Then the following are equivalent:
\begin{eqnarray*}
d(c,v) &\ge& d(c_1,v)\\
|x-x_v|+ |y-y_v| &\ge&|x_1-x_v|+|y_1-y_v| \\
x+|x_v|+ y+|y_v| &\ge&x_1+|x_v|+y_1+|y_v| \\
x+ y &\ge&x_1+y_1
\end{eqnarray*}
which holds by the choice of $c_1$.
Second, let $v\in Q_2$, that is, $x_v\le0$ and $y_v\ge0$.
Now $x-x_1\ge y-y_1$ by the case assumption.
Furthermore, $x-x_1\ge y_1-y$
by the choice of $c_1$. Therefore, $x-x_1\ge |y_1-y|$.
Then we have the equivalences
\begin{eqnarray*}
d(c,v) &\ge& d(c_1,v)\\
 |x-x_v|+ |y-y_v| &\ge&|x_1-x_v|+|y_1-y_v| \\
x+|x_v|+ |y-y_v| &\ge&x_1+|x_v|+|y_1-y_v| \\
x-x_1+ |y-y_v| &\ge&|y_1-y_v|
\end{eqnarray*}
where the latter holds because
\[
x-x_1+ |y-y_v| \ge
|y_1-y|+|y-y_v|\ge
|y_1-y_v|
\]
by the triangle inequality.
Consequently, voter $v$ prefers candidate $c_1$ over candidate $c$,
as desired.
Thus, any voter $v\in Q_2\cup Q_3$ prefers $c_1$ over~$c$ and these two quadrants contain at least half of the voters. An illustration of this proof is given in Figure~\ref{fig:lem2}(a).

 \begin{figure}[ht]
 \begin{tikzpicture}[scale=.7,>={MyStealth}]
 \node []  at (-5.5,4.5) {(a)};            
 \draw[step=.5cm,color=gray, ultra thin] (-4.2,-4.2) grid (4.2,4.2);
\draw[->,black, ultra thick] (0,-4.3) -- (0, 4.3) node[above]{\footnotesize$y$};
  \draw[->,black, ultra thick] (-4.3,0) -- (4.3, 0) node[right]{\footnotesize$x$};
          \draw[black, dashed, thick] (0.5,0) -- (4.2, 3.7);
     \node [r-sq] at (2,1.5) {};
 \node [b-sq] at (3,1) {}; \node []  at (3.3,1.3) {{{\footnotesize $c$}}};
  \node [cl-sq] at (0.5,3.5) {}; 
   \node [cl-sq] at (3.5,2) {}; 
    \node [cl-sq] at (1.5,3) {}; 
        \node [cl-sq] at (2.5,4) {}; 
  \node [g-ci] at (-2,-2) {}; 
    \node [g-ci] at (-1,-3.5) {}; 
      \node [g-ci] at (-3,-1) {}; 
        \node [g-ci] at (-2.5,-4) {}; 
          \node [g-ci] at (-1,-1.5) {}; 
            \node [g-ci] at (-3.5,-3) {}; 
     \node [bl-ci] at (2,-1) {}; 
       \node [bl-ci] at (1,-3.5) {}; 
         \node [bl-ci] at (3.5,-2) {}; 
           \node [bl-ci] at (1.5,-0.5) {}; 
             \node [bl-ci] at (2.5,-2.5) {}; 
       \node [bl-ci] at (0.5,1) {}; 
       \node [bl-ci] at (1.5,2.5) {}; 
         \node [bl-ci] at (3.5,4) {}; 
           \node [bl-ci] at (2.5,0.5) {}; 
                    \node [bl-ci] at (2,3.5) {}; 
           \node [bl-ci] at (4,2.5) {};
                 \node [g-ci] at (-2.5,1.5) {}; 
       \node [g-ci] at (-1.5,3.5) {}; 
         \node [g-ci] at (-1,1) {}; 
           \node [g-ci] at (-3,2) {}; 
            \node [g-ci] at (-3.5,4) {}; 
\node []  at (4.3,4.3) {{\footnotesize $Q_1$}};
\node []  at (4.3,-4.3) {{\footnotesize $Q_4$}};
\node []  at (-4.3,-4.3) {{\footnotesize $Q_3$}};
\node []  at (-4.3,4.3) {{\footnotesize $Q_2$}};
\draw[](3.5,0)--(0,3.5);
\node [fill=white, inner sep=0pt]  at (1.99,1.84) {{\color{black}{\footnotesize $c_1$}}};
 \end{tikzpicture}
\hfill
\begin{tikzpicture}[scale=.7,>={MyStealth}]
\node []  at (-5.5,4.5) {(b)};  
\draw[step=.5cm,color=gray, ultra thin] (-4.2,-4.2) grid (4.2,4.2);
 \draw[->,black, ultra thick] (0,-4.3) -- (0, 4.3) node[above]{\footnotesize$y$};
  \draw[->,black, ultra thick] (-4.3,0) -- (4.3, 0) node[right]{\footnotesize$x$};
     \draw[black, dashed, thick] (0.5,0) -- (4.2, 3.7);
    \node []  at (4.3,4.3) {{\footnotesize $Q_1$}};
\node []  at (4.3,-4.3) {{\footnotesize $Q_4$}};
\node []  at (-4.3,-4.3) {{\footnotesize $Q_3$}};
\node []  at (-4.3,4.3) {{\footnotesize $Q_2$}};
    \node [r-sq] at (2,1.5) {}; 
 \node [cl-sq] at (3,1) {}; 
  \node [b-sq] at (0.5,3.5) {}; 
  \node [] at (0.8,3.8) {{{\footnotesize $c$}}};
   \node [cl-sq] at (3.5,2) {}; 
    \node [cl-sq] at (1.5,3) {}; 
            \node [cl-sq] at (2.5,4) {}; 
      \node [g-ci] at (-2,-2) {}; 
    \node [g-ci] at (-1,-3.5) {}; 
      \node [g-ci] at (-3,-1) {}; 
        \node [g-ci] at (-2.5,-4) {}; 
          \node [g-ci] at (-1,-1.5) {}; 
            \node [g-ci] at (-3.5,-3) {}; 
     \node [g-ci] at (2,-1) {}; 
       \node [g-ci] at (1,-3.5) {}; 
         \node [g-ci] at (3.5,-2) {}; 
           \node [g-ci] at (1.5,-0.5) {}; 
             \node [g-ci] at (2.5,-2.5) {}; 
               \node [bl-ci] at (0.5,1) {}; 
                \node [bl-ci] at (2,3.5) {}; 
           \node [bl-ci] at (4,2.5) {};
       \node [bl-ci] at (1.5,2.5) {}; 
         \node [bl-ci] at (3.5,4) {}; 
           \node [bl-ci] at (2.5,0.5) {}; 
       \node [bl-ci] at (-2.5,1.5) {}; 
       \node [bl-ci] at (-1.5,3.5) {}; 
         \node [bl-ci] at (-1,1) {}; 
           \node [bl-ci] at (-3,2) {}; 
            \node [bl-ci] at (-3.5,4) {};    
\draw[](3.5,0)--(0,3.5);
\node [fill=white, inner sep=0pt]  at (1.99,1.84) {{\color{black}{\footnotesize $c_1$}}};

 \end{tikzpicture}
\caption{An illustration of the proof of Lemma~\ref{lem:quadrant}, where the voters are circles and the candidates are squares (only candidates in quadrant $Q_1$ are shown). 
The candidate $c_1\in Q_1$ closest to the origin (see the thin solid line)
is shown in black. 
Case (a) concerns any gray candidate $c\in Q_1$ below the
dashed line, while Case (b) concerns any gray candidate
$c\in Q_1$ above the dashed line.
In Case (a), the voters in $Q_2\cup Q_3$, shown in white, prefer
candidate $c_1$ over the alternate candidate $c$.
In Case (b), the voters in $Q_3\cup Q_4$, again in white, prefer
candidate $c_1$ over the alternate candidate $c$.
}\label{fig:lem2}
\end{figure}

\item[(b)] $y-y_1 \ge x-x_1$. 
Then the analogous argument (via the symmetry in $x$ and $y$) applies with respect to voters of the form $v=(x_v, y_v)$ with $y_v\le 0$.
That is, any voter $v\in Q_3\cup Q_4$ prefers $c_1$ over $c$,
and these two quadrants contain at least half of the voters.
This is illustrated in Figure~\ref{fig:lem2}(b).
\qedhere
\end{enumerate}
\end{proof}

\begin{thm}\label{thm:DS4}
The set $S=\{c_1, c_2, c_3, c_4\}$ is a dominating set in the majority digraph, and hence a Condorcet winning set. 
\end{thm}
\begin{proof}
By Lemma~\ref{lem:quadrant}, each $c_i\in Q_i$, $1 \leq i \leq 4$, beats every other candidate in its quadrant if it exists. 
Thus every candidate outside of $S=\{c_1, c_2, c_3, c_4\}$ is
beaten by a candidate in $S$. Thus $S$ satisfies condition (\ref{eq:dom-set}) and so is a dominating set in the majority digraph.
\qedhere
\end{proof}

From Theorem~\ref{thm:DS4}, the Condorcet dimension is at most $4$.

\section{An Upper Bound on the Condorcet Dimension: Infinity Norm}\label{sec:infinity}

We now consider the {\em infinity norm} (or {\em supremum norm}).
Given two points ${\bf p}_1=(x_1, y_1)$ and ${\bf p}_2=(x_2, y_2)$ in two dimensions, the distance between 
them is $d^{\infty}({\bf p}_1, {\bf p}_2) = \max ( |x_1-x_2|, |y_1- y_2| )$.
The Manhattan norm and the infinity norm are closely related. Specifically, imagine that we rotate the axes of measurement by $45$ degrees (say, clockwise as in Figure~\ref{fig:transform}).
Let $D^{\infty}$ be the infinity norm
using these new axes of measurement.
The resulting distances are equivalent to the
Manhattan norm.

\begin{claim}\label{claim:rotation}
For any pair of points ${\bf p}_1$ and ${\bf p}_2$ in two dimensions, we have
$d^1({\bf p}_1, {\bf p}_2)= \sqrt{2}\cdot D^{\infty}({\bf p}_1, {\bf p}_2)$.
\end{claim}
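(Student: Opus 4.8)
The plan is to reduce the claim to a one-line algebraic identity about the difference vector. Write ${\bf z} = {\bf p}_1 - {\bf p}_2 = (a,b)$, so that $d^1({\bf p}_1,{\bf p}_2) = |a| + |b|$; since both the Manhattan norm and the rotated infinity norm are translation invariant, it suffices to evaluate $\hat{d}^\infty$ on the pair ${\bf z}$ and the origin rather than on ${\bf p}_1$ and~${\bf p}_2$.

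Next I would write the rotation explicitly. Rotating the coordinate axes by $45^{\circ}$ replaces the coordinate pair $(a,b)$ of ${\bf z}$ by $\tfrac{1}{\sqrt 2}(a-b,\,a+b)$ (up to the signs inside the components and their order, which are immaterial under an absolute value), and therefore
\[
\hat{d}^\infty({\bf p}_1,{\bf p}_2) \;=\; \tfrac{1}{\sqrt 2}\,\max\bigl(|a-b|,\,|a+b|\bigr).
\]
Thus the claim is equivalent to the elementary identity $\max\bigl(|a+b|,\,|a-b|\bigr) = |a|+|b|$ for all $a,b\in\mathbb{R}$.

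Then I would verify this identity by a short case analysis: the unordered pair $\{\,|a+b|,\,|a-b|\,\}$ is invariant under $a\mapsto -a$ (which merely swaps the two values) and likewise under $b\mapsto -b$, so we may assume $a\ge 0$ and $b\ge 0$; in that case $|a+b| = a+b = |a|+|b|$ while $|a-b|\le a+b$, so the maximum equals $|a|+|b|$. Substituting back through the rotation yields $d^1({\bf p}_1,{\bf p}_2) = \sqrt 2\cdot\hat{d}^\infty({\bf p}_1,{\bf p}_2)$, as claimed.

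There is essentially no genuine obstacle here; the only points requiring any care are the sign bookkeeping in the reduction to nonnegative $a$ and $b$, and fixing the normalization and direction of the $45^{\circ}$ rotation so that the factor $\sqrt 2$ lands on the correct side of the equation — inverting that factor is the one place where a slip could occur.
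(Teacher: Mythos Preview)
Your proposal is correct and follows essentially the same approach as the paper: translate so one point is the origin, express the rotated coordinates, and reduce to the nonnegative-coordinate case. Your version is in fact more careful than the paper's, which simply asserts ``without loss of generality $\mathbf{p}_2$ lies in the positive quadrant'' and then states $\hat d^\infty(\mathbf{0},\mathbf{p}_2)=\tfrac{1}{\sqrt 2}(x_2+y_2)$ without spelling out the sign-symmetry argument or the identity $\max(|a+b|,|a-b|)=|a|+|b|$ that you make explicit.
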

\begin{proof}
Without loss of generality, we may assume 
that ${\bf p}_1={\bf 0}$ and that ${\bf p}_2$
lies in the positive quadrant.
Thus
$d^1({\bf 0}, {\bf p}_2) = |x_2|+|y_2| = x_2+y_2$. 
On the other hand, $D^{\infty}({\bf 0}, {\bf p}_2) = \frac{1}{\sqrt{2}}\cdot (x_2+y_2)$.
The claim follows. This proof is illustrated in Figure~\ref{fig:transform}.
\end{proof}

 \begin{figure}[h!] 
 \begin{tikzpicture}[scale=.7,>={MyStealth}]
\node []  at (-4,5.5) {{(a)}};  
\node []  at (6.9,5.5) {{(b)}};  
\draw[step=1cm,color=gray, ultra thin] (-4.2,-4.2) grid (4.2,4.2);
 \draw[->, black, ultra thick] (0,-4.3) -- (0, 4.3);
  \node []  at (0, 4.6) {{\footnotesize {$y$} }}; 
  \draw[->, black, ultra thick] (-4.3,0) -- (4.3, 0);
         \node []  at (4.5, 0) {{\footnotesize {$x$} }}; 
    \node [r-sq] at (2,2) {}; \node []  at (2.45,2.3) {{\color{black}{\footnotesize ${\bf p}_1$}}};
 \node [cl-sq] at (0,0) {}; \node []  at (0.3,-.3) {\footnotesize $\bf0$};
  \node [b-sq] at (-2,0) {}; \node []  at (-2.4,-.3) {{\color{black}{\footnotesize ${\bf p}_2$}}};
 \end{tikzpicture}
\hskip-6mm
\lower10mm\hbox{
 \begin{tikzpicture}[scale=.7,>={MyStealth}]
\begin{scope}[rotate=-45]
 \draw[step=1cm,color=gray, ultra thin] (-4.2,-4.2) grid (4.2,4.2);
 \draw[->, black, ultra thick] (0,-4.3) -- (0, 4.3);
   \node []  at (0, 4.6) {{\footnotesize $y$ }}; 
  \draw[->, black, ultra thick] (-4.3,0) -- (4.3, 0);
   \node []  at (4.5, 0) {{\footnotesize $x$ }};
  \end{scope}
      \node [r-sq] at (2,2) {}; 
      \node [fill=white, inner sep=0pt] at (2,2.5) {{\color{black}{\footnotesize ${\bf p}_1$}}};
 \node [cl-sq] at (0,0) {}; 
 \node []  at (0,-.5) {\footnotesize $\bf0$};
  \node [b-sq] at (-2,0) {};
  \node [fill=white, inner sep=0pt]  at (-2.4,-.3) {{{\footnotesize ${\bf p}_2$}}};
\end{tikzpicture}
 }
\caption{Using the Manhattan norm in (a), we have $d^1({\bf 0}, {\bf p}_1)=4, d^1({\bf 0}, {\bf p}_2)=2$ and $d^1({\bf p_1}, {\bf p}_2)=6$. Rotating the axis by 45 degrees and the applying the infinity norm gives $\hat{d}^\infty({\bf 0}, {\bf p}_1)= 2\sqrt{2}= \frac{1}{\sqrt{2}}\cdot 4, \hat{d}^\infty({\bf 0}, {\bf p}_2)=\sqrt{2}=\frac{1}{\sqrt{2}}\cdot 2$ and $\hat{d}^\infty({\bf p_1}, {\bf p}_2)= 3\sqrt{2}=\frac{1}{\sqrt{2}}\cdot 6$.}\label{fig:transform}
\end{figure}

\begin{thm}\label{thm:four-infinity}
In a $2$-dimensional metric space with infinity norm, the Condorcet dimension of any election is at most $4$.
\end{thm}
\begin{proof}
Take any election instance. By Claim~\ref{claim:rotation}, there is an equivalent election with the Manhattan norm
that induces identical preference lists.
By Theorem~\ref{thm:DS4}, this election has Condorcet dimension at most $4$.
\end{proof}

\section{A Lower Bound on the Condorcet Dimension}\label{sec:lower}

In this section, we study lower bounds on the Condorcet dimensions. Again, we focus on the case $D=2$ of metric spaces in two dimensions. Specifically, we prove that instances exist in two dimensions where a Condorcet winner does not exist. This result applies for any $p$-norm, including the infinity and Manhattan norms.
    
\begin{lem} \label{lem:LB}
  In a $2$-dimensional metric space under the $p$-norm, there exist instances with Condorcet dimension at least~$2$, for any $p\ge 1$.  
\end{lem}

\begin{proof}
Consider an election with three voters ${\bf v}_1=(9,0), {\bf v}_2=(0,9)$ and ${\bf v}_3=(-9,0)$ and three candidates
${\bf p}_1=(1,-1), {\bf p}_2=(8, 10)$ and ${\bf p}_3=(-9, 9)$.
Using the $p$-norm, for any $p\ge 1$, the preferences rankings of the voters
are 
${\bf v}_1: {\bf p}_1 \succ {\bf p}_2 \succ {\bf p}_3$, 
${\bf v}_2: {\bf p}_2 \succ {\bf p}_3 \succ {\bf p}_1$,
and
${\bf v}_3: {\bf p}_3 \succ {\bf p}_1 \succ {\bf p}_2$.
Hence there is a Condorcet cycle and so no Condorcet winner.
\end{proof}

\section{Embedding Voter Preferences in a High-Dimensional Metric Space}\label{sec:embed}

After our study of bounding and finding a Condorcet winning set for instances that can be represented metrically equipped with some $p$-norm, a natural question is the following: 
Can we test whether an instance has a representation in a metric space, and can we also find it? We answer these questions affirmatively. 
Already Bogomolnaia and Laslier~\cite{BL07} and Chen et al.~\cite{CNS22}
showed that any set of voter preferences over candidates has a corresponding embedding in a metric space with proximity voting, for the Euclidean norm ($p=2$) and Manhattan norm ($p=1$), respectively, for $D = \min\, [m-1, n]$.
We provide simple constructions for general $p$-norms, giving $D=\min [m, n]$, in the subsequent two theorems.

\begin{thm}\label{thm:simplex-m}
Consider an election with $m$ candidates and $n$ voters, each with a strict
preference list $\succ_v$ per voter~$v$ over the candidates. 
An embedding of dimension $D=m$ can be computed in polynomial time for any $p$-norm, $p\ge 0$.
\end{thm}
\begin{proof}
In our construction we place voter $j$ at position
${\bf y}_j$ and candidate $i$ at position ${\bf x}_i$. 
First, we place the $m$ candidates at the corners of a simplex in $m$ dimensions. Specifically, let ${\bf e}_i=(0,\dots, 0,1,0,\dots,0)$,
where the $i$th coordinate has value $1$. 
Then we position candidate $i$ at ${\bf x}_i = 2m\cdot {\bf e}_i$, for each $1\le i\le m$.
Next, consider voter $j$ for all $1\le j\le n$.
Let $\rho(i,j)$ be the position of candidate $i$ in the preference list of voter $j$. Place voter $j$ at ${\bf y}_j = (m-\rho(1,j), m-\rho(2,j), \dots, m-\rho(m,j))$. This embedding can be computed in polynomial time as it suffices to compute all the $\rho(i,j)$; this be done by reading the preference lists of each voter and their combined length is the input size of the instance.
 
Now, for the infinity norm we have
\begin{eqnarray*}
d^\infty({\bf y}_j, {\bf x}_i) &=& \max [ (2m-(m-\rho(i,j)) , \max_{\ell\neq i} (m-\rho(\ell, j) ]\\
&=& \max [ m+\rho(i,j)) , \max_{\ell\neq i} (m-\rho(\ell, j) ]\\
&=& m+\rho(i,j)
\end{eqnarray*}
Observe that these distances are increasing with $\rho(i,j)$. Hence
this positioning is consistent with the preference ranking of voter $j$.
Moreover this construction also works for any $p$-norm, with $0\le p <\infty$. To see this, observe that
\begin{eqnarray*}
&&d^p({\bf y}_j, {\bf x}_i)^p-d^p({\bf y}_j, {\bf x}_k)^p\\
&=& (m+\rho(i,j))^p + \sum_{\ell\neq i} \big(m-\rho(\ell, j)\big)^p
-(m+\rho(k,j))^p - \sum_{\ell\neq k} \big(m-\rho(\ell, j)\big)^p\\
&=& \Big((m+\rho(i,j))^p - (m+\rho(k,j)^p \Big)
+\Big( \big(m- \rho(k, j)\big)^p  -\big(m-\rho(i, j)\big)^p\Big) \\
&>& 0 
\end{eqnarray*}
where the strict inequality holds whenever $\rho(i,j)>\rho(k,j)$.
Thus the distances increase with $\rho(i,j)$.
\end{proof}

\begin{thm}\label{thm:simplex-n}
Consider an election with $m$ candidates and $n$ voters, each with a strict
preference list $\succ_v$ per voter~$v$ over the candidates. 
An embedding of dimension $D=n$ can be computed in polynomial time for any $p$-norm, $p>1$.
\end{thm}
\begin{proof}
In this construction, we place the $n$ voters, rather than the candidates,
at $n$ corners of an $n$-dimensional simplex. 
Specifically, we may use the construction of Bogomolnaia and Laslier~\cite{BL07}.
Place voter $j$ at ${\bf y}_j = B\cdot {\bf e}_j$, for each $1\le j\le n$, where $B$ (to be determined) is large.
Next we position the $m$ candidates. Again, let
$\rho(i,j)$ be the position of candidate $i$ in the preference list of voter $j$. Then we position each candidate~$i$ at 
${\bf x}_i = (-\rho(i,1), -\rho(i,2), \dots, -\rho(i,n))$.

For the infinity norm, we then have
\begin{eqnarray*}
d^\infty({\bf y}_j, {\bf x}_i) &=& \max\, \Big[\, B+\rho(i,j) \, ,\,  
 \max_{\ell\neq j} \rho(i, \ell)\big)\, \Big]\\
&=& B+\rho(i,j)
\end{eqnarray*}
Here the second equality holds if $B\ge m$.
These distances are then increasing with $\rho(i,j)$, so they are
consistent with the preference ranking of voter $j$. 

Next consider any $p$-norm, with $1< p <\infty$. Then
 \begin{eqnarray*}
&&d^p({\bf y}_j, {\bf x}_i)^p-d^p({\bf y}_j, {\bf x}_k)^p\\
&=& (B+\rho(i,j))^p + \sum_{\ell\neq j} \rho(i, \ell)^p
-(B+\rho(k,j))^p - \sum_{\ell\neq j} \rho(k, \ell)^p\\
&\le& (B+\rho(i,j))^p  -(B+\rho(k,j))^p  + (n-1)\cdot (m^p-1)\\
&<& 0 
\end{eqnarray*}
because, since $p>1$, there exists a sufficiently large choice of $B$ such that the strict inequality holds whenever $\rho(i,j) < \rho(k,j)$. 
Hence, the distances are increasing with $\rho(i,j)$ and are consistent with the preference ranking of voter $j$.   
\end{proof}

\section{Conclusion}
We presented bounds on the Condorcet dimension under 
the spatial model of voting. Several open problems remain. 
For the case ${D=2}$, we have shown the Condorcet dimension is at least~$2$ and at most~$4$; closing the gap between the lower and upper bounds is an intriguing problem.
Our upper bound holds for the Manhattan and infinity norms;
does it extend to other distance norms? 
However, the outstanding open problem concerns 
whether or not the Condorcet dimension is at most $3$ in the general case (equivalently, instances where $D$ can be arbitrarily large). 
One current evidence for this is an observation by Bloks \cite{Bloks18}
that elections where the 
majority digraph
defines a tournament with minimum dominating set
of size four seem to be ``random'' enough to have very small
Condorcet dimension. 

\

\noindent{\bf Acknowledgement.} 
We are very grateful to Louis-Roy Langevin for finding an error in a previous version of the paper that claimed a factor 3 upper bound.
We thank the referees for their thoughtful comments and suggestions to improve this paper and Jannik Peters for comments and directing us to the papers \cite{BL07} and \cite{CNS22}. 
This work was supported by NSERC grant 2022-04191.

\bibliography{sample}

\end{document}